\definecolor{plecs1}{rgb}{0,.7969,0}
\definecolor{plecs2}{rgb}{1,0,0}
\definecolor{plecs3}{rgb}{0,0,1}
\definecolor{plecs4}{rgb}{1,.7969,0}
\definecolor{plecs5}{rgb}{1,0,1}
\definecolor{plecs6}{rgb}{0,1,1}
\definecolor{vgRed}{RGB}{193, 48, 24}
\definecolor{vgOrange}{RGB}{243, 111, 19}
\definecolor{vgYellow}{RGB}{235, 203, 56}
\definecolor{vgGreen}{RGB}{162, 185, 105}
\definecolor{vgLightBlue}{RGB}{13, 149, 188}
\definecolor{vgDarkBlue}{RGB}{6, 56, 81}
\acrodef{mg}[MG]{microgrid}
\acrodef{mpc}[MPC]{model predictive control}
\acrodef{dmpc}[DMPC]{distributed model predictive control}
\acrodef{ems}[EMS]{energy management system}
\acrodef{res}[RES]{renewable energy sources}
\acrodef{al}[AL]{augmented Lagrangian}
\acrodef{dg}[DU]{distributed unit}
\acrodef{dl}[DL]{demand load}
\acrodef{pcc}[PCC]{point of common coupling}
\acrodef{admm}[ADMM]{alternating direction method of multipliers}
\acrodef{dd}[DD]{dual decomposition}
\acrodef{ac}[AC]{alternating current} \acused{ac}
\acrodef{dc}[DC]{direct current} \acused{dc}
\acrodef{facts}[FACTS]{flexible \ac{ac} transmission system}
\acrodef{mip}[MIP]{mixed integer program}
\acrodef{miqp}[MIQP]{mixed integer quadratic program}
\acrodef{ocp}[OCP]{optimal control problem}
\acrodef{opf}[OPF]{optimal power flow}
\acrodef{saa}[SAA]{sample average approximation}
\acrodef{cf}[CF]{communication failure}
\acrodef{pv}[PV]{photovoltaic}
\definecolor{philipp}{RGB}{205, 102, 000}
\definecolor{christian}{RGB}{25, 122, 000}
\def\R{\mathbb{R}}
\def\N{\mathbb{N}}
\newcommand{\diag}[1]{\operatorname{diag}(#1)}
\newcommand{\hh}{\nobreak\hspace{0pt}-\hspace{0pt}}
\newcommand{\T}{^\mathsf{T}}
\newcommand{\RR}{\mathbb{R}}
\newcommand{\Rp}{\mathbb{R}_{>0}}
\newcommand{\Rpz}{\mathbb{R}_{\geq0}}
\newcommand{\Rnz}{\mathbb{R}_{\leq0}}
\newcommand{\Npz}{\mathbb{N}_{\geq0}}
\newcommand{\tmin}{\text{min}}
\newcommand{\tmax}{\text{max}}
\newcommand{\tst}{\text{s}}
\newcommand{\tth}{\text{t}}
\newcommand{\trs}{\text{r}}
\newcommand{\tll}{l}
\newcommand{\tel}{\text{el}}
\newcommand{\ok}{(k)}
\newcommand{\okk}{(k+1)}
\newcommand{\okm}{(k-1)}
\newcommand{\okj}{(k+j)}
\newcommand{\Ts}{\mathrm{T}_\text{s}}
\newcommand{\OneT}{\mathbf{1}^{\T}}
\newcommand{\pmin}{p^\tmin}
\newcommand{\pmax}{p^\tmax}
\newcommand{\pth}{p_\tth}
\newcommand{\pst}{p_\tst}
\newcommand{\prs}{p_\trs}
\newcommand{\uth}{u_\tth}
\newcommand{\ust}{u_\tst}
\newcommand{\urs}{u_\trs}
\newcommand{\wrs}{w_\trs}
\newcommand{\wll}{w_\tll}
\newcommand{\psti}{p_{\tst,i}}
\newcommand{\ptk}{\pth\ok}
\newcommand{\psk}{\pst\ok}
\newcommand{\prk}{p_\trs\ok}
\newcommand{\hatw}{\hat{w}}
\newcommand{\psik}{p_{\tst,i}\ok}
\newcommand{\xk}{x\ok}
\newcommand{\xsmin}{x^\tmin}
\newcommand{\xsmax}{x^\tmax}
\newcommand{\deltath}{\delta_\tth}
\newcommand{\deltatk}{\deltath\ok}
\newcommand{\com}{\operatorname{com}}
\newcommand*\rfrac[2]{{}^{#1}\!/_{#2}}
\newtheorem{problem}{Problem}
\newtheorem{proposition}{Proposition}
\newtheorem{theorem}{Theorem}[section]
\newtheorem{remark}[theorem]{Remark}
\renewcommand{\proof}[1][]{\ifthenelse{\equal{#1}{}}{\noindent\hspace{2em}{\itshape Proof: }}{\noindent\hspace{2em}{\itshape Proof #1: }}\@\xspace}
\begin{document}

\title{Fallback Strategies in Operation Control of Microgrids with Communication Failures\\
}

\author{{I. Löser}, {A.\,K. Sampathirao}, {S. Hofmann} and {J. Raisch}
\thanks{I. Löser, A.\,K. Sampathirao and S. Hofmann are with the Control Systems Group, Technische Universit\"at Berlin, Germany,
\href{mailto:loeser@campus.tu-berlin.de}{\texttt{loeser@campus.tu-berlin.de}}, \href{mailto:sampathirao@control.tu-berlin.de}{\texttt{sampathirao@control.tu-berlin.de}}, \href{mailto:hofmann@control.tu-berlin.de}{\texttt{hofmann@control.tu-berlin.de}}.}
\thanks{J. Raisch is with the Control Systems Group, Technische Universit\"at Berlin, Germany and Max-Planck-Institut f\"ur Dynamik komplexer technischer Systeme, Magdeburg, Germany, \href{mailto:raisch@control.tu-berlin.de}{\texttt{raisch@control.tu-berlin.de}}.}
\thanks{The work was supported by the German Federal Ministry for Economic Affairs and Energy (BMWi), Project No. 0324024A. We also acknowledge funding by the DFG priority program 1-500047901EF.}
}

\maketitle


\begin{abstract}
This paper proposes a \ac{mpc}-based energy management system to address communication failures in an islanded \ac{mg}. The energy management system relies on a communication network to monitor and control power generation in the units. A communication failure to a unit inhibits the ability of the \acs{mpc} to change the power of the unit. However, this unit is electrically connected to the network and ignoring this aspect could have adverse effect in the operation of the \acl{mg}. Therefore, this paper considers an \ac{mpc} design that includes the electrical connectivity of the unit during communication failures. This paper also highlights the benefit of including the lower layer control behaviour of the \ac{mg} to withstand communication failures. The proposed approaches are validated with a case study.
\end{abstract}



\section{Introduction} \label{ch:01_introduction}

Reducing greenhouse emissions from the electrical sector leads to a worldwide increase in the installation of \ac{res}~\cite{Sec2018}.
\ac{res} are small-scale \acp{dg} such as \ac{pv} plants and wind
turbines, characterised by intermittent generation.
The volatility of \ac{res} has created new challenges for their integration into the power system.
In this context, \acfp{mg} are considered as an attractive solution to address the volatility of \ac{res}.
An \ac{mg} is a self-reliant, small-scale power system defined in a specified electrical region that manages its local load demands with its local \acp{dg}~\cite{DOE2012}.

A typical \ac{mg} comprises \ac{res}, storage units like battery plants and thermal units like diesel generators. The \ac{mg} has an \ac{ems} to facilitate its economical and reliable operation. This \ac{ems} coordinates the charging/discharging schedule of the storage units and the switching of the thermal units, taking into account the \ac{res} generation and load demand. Model predictive control (\ac{mpc}) is a popular control strategy used in the design of \ac{ems}~\cite{ParRikGli2014, SolDinJor2017, HSR+2019}.

Coordination by the \ac{ems} requires a communication infrastructure that allows to monitor and control the units~\cite{gao2012survey}. The communication network is prone to various disruptions like sensor and software failures, problems in the communication interface or human errors that lead to disconnecting the network interface. These failures result in an interruption of the communication between the corresponding unit and the \ac{ems}. In the following, the loss of communication is referred to as \ac{cf}.

In recent years, the impact of communication loss on power systems has been recognised, e.g., \cite{FalFuWu2012,SiqMoh2016}.
In~\cite{AmiFotSha2012}, the impact of communication loss is measured through Monte Carlo simulation.
In~\cite{MouAkaDeb+2018}, the authors considered a combined model for the power and communication network to identify the critical links whose failure can lead to maximal disruption in the network.
In~\cite{FalKarFu2013}, the authors stated that loss of communication results in loss of control and quantified the resulting loss as the difference between the minimum operation cost of the power system with and without \ac{cf}.
The main limitation of these works is that they neither discuss operation strategies during a \ac{cf} nor do they account for \ac{res} generation.

In general, an \ac{ems} is equipped with a control mode that is concerned with failures, e.g.,~\cite{ZidKhaAbd2017}.
In~\cite{ProEnrFlo2015}, a fault\hh tolerant \ac{mpc}\hh based \ac{ems} is proposed to ensure proper amount of energy in the storage devices such that the demand can be covered. However, the existing methods in literature do not distinguish between \acp{cf} and electrical failures in the system. Thereby, there is a lack of targeted strategies to address \ac{cf}. A failure in the communication line does not necessarily imply an electrical disconnection of the corresponding unit. Particularly in remote \acp{mg}, disconnecting the units and rescheduling the power generation with the remaining units could be costly. A recent paper that addresses controller design with \acp{cf} is \cite{zhang2018joint}. A decentralised control that can achieve near-optimal performance without any communication is proposed. However, this work does not take into account the storage dynamics and uncertainties of the \ac{res}.


There is ample research addressing control design with unreliable communication networks (see, e.g., \cite{ParMahDin+2018}).
The authors of \cite{pena2008lyapunov} propose a Lyapunov-based \ac{mpc} that considers data losses in the network, with the actuators using the last received optimal trajectory in the event of \ac{cf}.
A similar strategy is considered at the actuator during loss of communication in \cite{GiuFraDom2015}.
These methods can be included in the \ac{mpc} for the \ac{mg}.
However, \acp{mg} employ a hierarchical control scheme (see, e.g.,~\cite{VanVasKoo+2013}) and not taking this into account could limit the benefits from the above approaches.

This paper focuses on developing a fallback strategy for \acp{cf} that considers the hierarchical control scheme of the \ac{mg} in an \ac{mpc}\hh based \ac{ems}.
In particular, it concentrates on \acp{cf} to storage units because scheduling their charging/discharging is central for the economical operation and integration of \ac{res}. Furthermore, the storage units have predefined storage capacities which must not be violated.

The remainder of this paper is organized as follows.
Section \ref{sec:system:model} introduces a combined model of the \ac{mg} and the communication network.
In Section~\ref{sec:control:problem}, the control objectives and the \ac{mpc} formulation for the \ac{mg} are proposed. Section~\ref{sec:fall:back:strategy} describes the \ac{mpc} problem with the fallback strategy when the \ac{cf} occurs at a storage unit.
Finally, we present a case study in Section~\ref{sec:caseStudy}. 

\subsection{Notation and mathematical preliminaries}
\label{sec:preliminaries}
The set of real numbers is denoted by $\R$, the set of nonpositive real numbers by $\R_{\leq 0}$ and the set of positive real numbers by $\R_{>0}$.
The set of positive integers is denoted by $\mathbb{N}$ and
the set of the first $n$ positive integers by $\N_{n} = \{1, 2, \ldots, n\}$. The set of integers is denoted by $\mathbb{Z}$. Now $\mathbb{Z}_{[a, b]}, a \leq b$ is the set $\{a, a+1, \ldots, b\}$ and $\mathbb{Z}_{[a, b]}, a > b$ is the empty set.
The cardinality of a set $\mathbb{A}$ is $|\mathbb{A}|$.
For a vector or matrix $x$, the transpose is denoted by $x\T$.
For a vector $x \in \RR^n$ with elements $x_i$ we define $\mathbf{1}\T x = \sum_{i = 1}^{n} x_{i}$.
For $x \in \RR$ and $\delta \in \{0,1\}$, we define $y=\delta\wedge x$ as: $y=x$ if $\delta=1$ and $y=0$ if $\delta=0$.
When $x$ and $\delta$ are vectors, this operation is performed element\hh wise. For a given vector $x \in \RR^{n}$, the diagonal matrix generated  from it is denoted as $\diag{x} \in \RR^{n \times n}$. For a set $A$, $A \backslash j = \{ i \in A | i \neq j\}$.


\section{System model}\label{sec:system:model}
This section introduces the discrete-time model of an islanded \ac{mg} that explicitly defines the behaviour of the units with and without \ac{cf}.
The communication between the \ac{ems} and the units of the \ac{mg} is two-way and it is assumed that both the \ac{ems} and the units can detect when the communication link fails.
The sampling time of the \ac{mpc} is several minutes and therefore this model is the steady-state model. It is based on our previous work in~\cite{HSR+2019}.

\subsection{\ac{mg} description} \label{sec:mgModel}
An \ac{mg} is an electrical system composed of \acp{dg} and loads.
We assume that all \acp{dg} in an \ac{mg} are either thermal units, storage units or \ac{res}.
Let us denote the number of \acp{dg} and loads by $g$ and $l$.
Each unit is labelled as $\ac{dg}_{i} $ with $ i \in \N_g$.
Furthermore, we define the index set of the thermal units as
$T = \{ i \mid i \in \N_{g}, \ac{dg}_{i} $ is a thermal unit$\}$.
Similarly, we define $S$ as the index set of storage units and $R$ as the index set of \ac{res}.
In the same way, the loads are labelled as $\acs{dl}_{i}, i \in \N_l = L$.
Note that the number of thermal units, storage units and renewable units in an $\ac{mg}$ is $|T|$, $|S|$ and $|R|$ with $g = |T| + |S| + |R|$.

At a given time instance $k \in \Npz$, let us denote the uncertainty as $w\ok = [w_\trs\ok\T ~ w_\tll\ok\T]\T$, where $w_\trs\ok \in \Rpz^{|R|}$ is the available renewable infeed and $w_\tll\ok \in \Rnz^{|L|}$ is the load demand.
The power set-points provided by the \ac{ems} are denoted by $u\ok = [u_\tth\ok\T ~ u_\tst\ok\T ~ u_\trs\ok\T ]\T$, where $u_\tth\ok\in \Rpz^{|T|}$, $u_\tst\ok\in\RR^{|S|}$ and $u_\trs\ok\in \Rpz^{|R|}$ are the set\hh points of the thermal units, storage units and \acs{res} respectively.
Similarly, the power output from the units is denoted by $p\ok = [p_\tth\ok\T ~ p_\tst\ok\T ~ p_\trs\ok\T]\T$, where $p_\tth\ok \in \Rpz^{|T|}, p_\tst\ok \in \RR^{|S|}$ and $p_\trs\ok \in \Rpz^{|R|}$.
Each thermal unit can be switched on or off, which is represented by the binary variable $\delta_\tth\ok \in \{0, 1\}^{|T|}$.
In storage units, the energy level is denoted by $x\ok \in \Rpz^{|S|}$.

\subsubsection{Communication network model}
The communication status of a unit is defined by the binary variable $\zeta \in \{0, 1\}$, where $\zeta = 1$ represents active communication and $\zeta = 0$ denotes a \ac{cf}. When $\zeta = 0$, the \ac{ems} can neither provide set-points to the unit nor can it receive new measurements from the unit. In such cases, the unit either disconnects from the \ac{mg} or uses a default power set-point~\cite{KosCosBin2013}.
The power set\hh point of a unit with the communication status $\zeta$ is given as
\begin{equation}\label{eq:communication}
\com(u, d, \zeta) := (1 - \zeta)d  + \zeta u
\end{equation}
where $d$ is the default power and $u$ is the power set\hh point provided by the \ac{mpc}. When $p$, $d$ and $\zeta$ are vectors, then $\com(\cdot, \cdot, \cdot)$ is element-wise.

At time step $k$, let us denote the communication status of all units as
$\zeta\ok = [\zeta_{\tth}\ok\T \zeta_{\tst}\ok\T \zeta_{\trs}\ok\T]\T$, where $\zeta_{\tth}\ok \in \{0, 1\}^{|T|}$, $\zeta_{\tth}\ok \in \{0, 1\}^{|S|}$ and $\zeta_{\tth}\ok \in \{0, 1\}^{|R|}$ are the
communication status of the thermal, storage and renewable units,  respectively.

In the \ac{res}, the power output $\prs\ok$ depends on the available power $\wrs\ok$ and is given as
\begin{subequations}\label{eq:unit:power}
\begin{equation}\label{eq:renewable:power}
  \prs\ok = \min(\com(\urs\ok, d_\trs\ok, \zeta_\trs\ok), \wrs\ok).
\end{equation}

As a part of the hierarchical control scheme of the \ac{mg}, there are local controllers at the storage and thermal units.
The steady-state power of a storage unit is a sum of two parts: the power corresponding to the power set\hh point provided by the \ac{ems} and the power injected by the local controller~\cite{KriSchRai2019}.
This can be represented as
\begin{equation}\label{eq:storage:power}
  \pst\ok = \com(\ust\ok, d_\tst\ok, \zeta_\tst\ok) + \chi_{s} \rho\ok,
\end{equation}
where $\chi_{\tst} \in \Rp^{|S|}$ is a positive gain vector and $\rho\ok \in \R$ is proportional to frequency deviation in \ac{mg}.
Note that the power injected by the local controller is independent of the communication status.

The power of the thermal units can be written similarly to the storage units. However, the thermal unit can generate power only when it is switched on.
Therefore, the power is given as
\begin{align}\label{eq:thermal:power}
  \ptk =& \com(\deltatk, \delta_d\ok, \zeta_\tth\ok) \wedge \chi_{t} \rho\ok \nonumber\\
  & + \com(\uth\ok, d_\tth\ok, \zeta_\tth\ok),
\end{align}
where $\chi_{\tth} \in \Rp^{|T|}$ is a positive gain and $\delta_d\ok \in \{0,1\}^{|T|}$ is the default switch state.
Here, the default power $d_{\tth}\ok$ is zero when the default switch state $\delta_d\ok = 0$. In words, in case of a \ac{cf}, the default switch state $\delta_d\ok$ determines if the local controller is active or inactive.
\end{subequations}

\subsubsection{Constraints on the units}
In an \ac{mg}, the power generated should match the load demand. At time step $k \in \N$, this condition is represented by the power balance equation
\begin{equation}\label{eq:powerBalance}
\OneT p_\tth\ok + \OneT p_\tst\ok + \OneT p_\trs\ok + \OneT w_\tll
\ok = 0.
\end{equation}
In the above equation~\eqref{eq:powerBalance}, the load demand $w_\tll \ok$ and the renewable power $p_\trs\ok$ are uncertain.
The $\rho\ok$ in the steady-state power output of the storage unit~\eqref{eq:storage:power} and conventional unit \eqref{eq:thermal:power} ensures that these fluctuations are compensated.

The constraints on the power set\hh points and the power of the units are defined as:
\begin{subequations}\label{eq:power:limits}
  \begin{alignat}{2}
    \pmin_{\tst} &\leq \psk & &\leq\pmax_{\tst},\label{eq:st:pminmax}\\
    \pmin_{\trs} &\leq \prk & &\leq\pmax_{\trs},\label{eq:rs:pminmax}\\
    \deltatk\wedge\pmin_{\tth} &\leq \ptk & &\leq\deltatk \wedge \pmax_{\tth},\label{eq:th:pminmax}
  \end{alignat}
with $\pmin_{\tst} \in \RR^{|S|}, \pmin_{\trs} \in \Rpz^{|R|}, \pmin_{\tth} \in \Rp^{|T|}$ and $\pmax_{\tst} \in \RR^{|S|}, \pmax_{\trs} \in \Rp^{|R|}, \pmax_{\tth} \in \Rp^{|T|}$.
The default power set\hh point $d\ok$ and power set\hh points $u\ok$ in~\eqref{eq:unit:power} must satisfy the above constraints.

The energy capacities of the storage units are represented by
  \begin{alignat}{2}
    \xsmin &\leq \xk & &\leq\xsmax \label{eq:xminmax},
  \end{alignat}
with $\xsmin\in\Rpz^{|S|}$, $\xsmax\in\Rp^{|S|}$.

The dynamic model of the storage is given as
\begin{align}\label{eq:storage:dynamics}
  x\okk = x\ok - \diag{b_{\tst}}\psk
\end{align}
where $b_{\tst} \in \Rp^{|S|}$ is the charging efficiency vector.
The charging efficiency depends on whether the storage is being charged or discharged and can be given as
\begin{align}\label{eq:storage:matB}
  b_{\tst,i} = \begin{cases}
  \Ts \eta_{\tst ,i} & \psik < 0\\
  \Ts /\eta_{\tst ,i} & \text{otherwise}
  \end{cases}
\end{align}
with the efficiency $\eta_{\tst, i} \in (0, 1]$ and the sampling time $\Ts \in \Rp$.

The units in the \ac{mg} are connected by an electrical network. Let us denote the number of power lines in this network as $m$. The power flow in these lines results in losses.
However, we do not consider these losses in the network as they are small compared to the uncertainty in the \ac{mg}.
Then, the constraints of the electrical network can be given as
\begin{align}
p_{\tel}^{\tmin} \leq H p\ok \leq p_{\tel}^{\tmax}
\end{align}
where $H \in \RR^{m \times (g+l)}$ depends on the topology of the network (see, e.g.,~\cite{StoJarAls2009}) and $p_{\tel}^{\tmin} \in \RR^{m}$, $p_{\tel}^{\tmax} \in \RR^{m}$ are the line limits.
\end{subequations}

Note that although the losses are neglected, they can easily modelled either as localised losses or as quadratic losses with semidefinite relaxations~\cite{EldNeiCas2018}.
These and other extensions can be easily included in the \ac{mpc} formulation.


\section{Control problem of the \ac{mg}}\label{sec:control:problem}
This section explains the \ac{mpc} problem formulation for the \ac{mg}. 

\subsection{Operation cost}
\ac{mpc} determines the sequence of power set\hh points that minimises an objective function encompasses economical and safety objectives.

In case of thermal units, operation cost corresponds to fuel cost and switching cost. The fuel cost can be approximated by a quadratic function (see, e.g., \cite{JMK2012}). The switching cost is included to penalise switching on and off.
The cost for the thermal unit is then
\begin{subequations}
  \begin{equation} \label{eq:costGen}
    \ell_{\tth}(p_{\tth}, \delta_{\tth}, \delta_{\tth}^{s}) =  a_{\tth}\T\Delta\delta_{\tth} + a_{\tth, 1}\T \delta_{\tth} + a_{\tth, 2}\T p_{\tth} + p_{\tth}\T \diag{a_{\tth, 3}} p_{\tth},
  \end{equation}
  with weights $a_{\tth} \in \Rpz^{|T|}, a_{\tth, 1}, a_{\tth, 2}, a_{\tth, 3} \in \R_{> 0}^{|T|}$ and $\Delta\delta_{\tth} = |\delta_{\tth} - \delta_{\tth}^{s}|$ where $\delta_{\tth}^{s} = \delta_{\tth}(k - 1)$.

For batteries, we consider cost that penalises high power charging and discharging and energy levels above or below a threshold that can negatively impact the lifespan of the battery. Let us denote the thresholds as $[x_{\tst}^{\tmin}, x_{\tst}^{\tmax}]$.
Now the cost of the storage is
\begin{equation} \label{eq:costStorage}
  \ell_{\tst}(\pst, x) = \pst\T \diag{a_{\tst}} \pst + \Delta x\T \diag{a_{\tst, 1}} \Delta x,
\end{equation}
where $\Delta x = \max(x_{\tst}^{\tmin} - x, 0) +  \max(x - x_{\tst}^{\tmax}, 0)$, $a_{\tst} \in \Rpz^{|S|}$ and $a_{\tst, 1} \in \Rpz^{|S|}$.

Although \ac{res} do not have any fuel cost, the energy wasted from renewable units would potentially lead to conventional generation. Therefore, a cost incentivising its usage is included
\begin{equation}
\ell_{\trs}(\prs) = -a_{\trs}\T \prs ,
  \end{equation}
  where $a_{\trs} \in \Rpz^{|R|} $ is a weight.
  Finally, the overal operation cost of the \ac{mg} is then
  \begin{align}\label{eq:sumOfUnitCosts}
      \ell(p, \delta_{\tth}, x, \delta_{\tth}^s) = \ell_{\tth}(p_{\tth}, \delta_{\tth}, \delta_{\tth}^s) + \ell_{\trs}(\prs) + \ell_{\tst}(\pst, x)
  \end{align}
\end{subequations}

\subsection{\ac{mpc} problem}
An \ac{mpc} scheme uses the plant model to predict the future operation of the \ac{mg} and to calculates a sequence of power set\hh points that minimises the operation cost.
The \ac{mg} model presented in Section~\ref{sec:system:model} includes available renewable infeed $\wrs$, load demand $\wll$ and the communication status $\zeta$.

In a certainty equivalence \ac{mpc} design, the model uses forecasts of the available renewable infeed and the load, disregarding information on the quality of the forecasts(see, e.g., \cite{ParRikGli2014, SolDinJor2017}).
At time step $k$, let us denote the forecasts for the available renewable infeed at $k+j$ as $\hatw_{\trs}(k+j |k)$ and load demand $\hatw_{\tll}(k+j |k)$. It is assumed that the communication status observed at time step $k$ holds over the future, i.e.,
\begin{align}
\hat{\zeta}(k+j |k) = \zeta\ok, j > 0.\nonumber
\end{align}
Finally, the $\rho\ok$ to ensure power balance~\eqref{eq:powerBalance} is zero as this \ac{mpc} assumes exact information over the prediction horizon.

Let us denote the power set\hh points and the power output predicted by the \ac{mpc} by $u(k + j|k)$, $p(k + j|k), j > 0$ respectively. To simplify notation, we write this as $u(k + j)$, $ p(k + j)$ respectively. For a prediction horizon $h \in \N$, the decision variables are  $\mathbf{u} = [u\ok ~ \cdots ~ u(k + h)]$ and $\boldsymbol{\delta} = [\delta_{\tth}\ok ~ \cdots ~ \delta_{\tth}(k + h)]$ and the corresponding power is $\mathbf{p} = [p\ok ~ \cdots ~ p(k + h)]$.

The standard certainty equivalence \ac{mpc} problem as in \cite{ParRikGli2014, SolDinJor2017} can be formulated as
\begin{problem}[\ac{mpc} with \ac{cf}] \label{prob:simple:mpc}
  \begin{subequations}
    \[
      \min_{\mathbf{u}, \boldsymbol{\delta}} \ V(\mathbf{p}, \boldsymbol{\delta}),
    \]
    where \[
    V(\mathbf{p}, \boldsymbol{\delta}) = \textstyle\sum_{j = 0}^{h} \gamma^{j}\ell(p\okj, \deltath\okj, x\okj, \delta(k-1)))
    \]
    subject to \eqref{eq:unit:power}, \eqref{eq:powerBalance}, \eqref{eq:power:limits}, current energy levels $x\ok$ and previous switch statuses $\deltath\okm$.
    Furthermore the forecasts of the available \ac{res}, load demand and communication status are
    \begin{equation}\label{eq:problem:forecast}\begin{split}
        \wrs\okj = \hatw_{\trs}\okj, \\
        \wll\okj = \hatw_{\tll}\okj, \\
        \zeta\okj = \hat{\zeta}\okj.
      \end{split}
    \end{equation}
    In the above formulation, the power of the units~\eqref{eq:unit:power} depends on $\rho\ok$.
    However, standard certainty equivalence formulation does not include local control. Here, this can be realised with the constraint
    \begin{align}\label{eq:certain:equivalent}
      \rho\okj = 0,
    \end{align}
    for $j \in \{ 0, 1, \cdots, h \}$. Here $\gamma \in (0, 1]$ is a discount factor.
\end{subequations}
\end{problem}

In the above problem, when a \ac{cf} occurs in a unit $i$, i.e., $\zeta_{i}\ok  = 0$, the unit produces the default power $p_{i}\okj = d_{i}\okj$.
In \ac{mpc} design with unreliable communication, the default power set\hh points are obtained from the \ac{mpc} solution before the \ac{cf}~\cite{pena2008lyapunov,GiuFraDom2015}. At time step $k$, this is given as
\begin{equation}\label{eq:fallBack:stategy}
		d_i\okj = \begin{cases}
		u_i(k + j | k - c ), j \in \mathbb{Z}_{[0, h - c]}\\
		u_i( k - c + h | k - c), j\in \mathbb{Z}_{[h - c + 1, h]}
	\end{cases}
  \end{equation}
$ \forall i \in \N_{g}$, where $k - c$ is the last time instance before the \ac{cf} has occurred. Finally,  $u_i(\cdot | k - c )$ is the \ac{mpc} solution obtained from Problem~\ref{prob:simple:mpc} at $k - c$.



\section{Communication fall back strategies}\label{sec:fall:back:strategy}

The main limitations of Problem~\ref{prob:simple:mpc} are 1) knowledge of the current energy state of the battery depends on the communication network 2) this formulation does not include the local control behaviour of the battery and thermal units. Neglecting local control restricts the ability to modify the power output during \ac{cf}. In this section, we extend the \ac{mpc} formulation to include local control to address \acp{cf}. In particular, we formulate the problem for \acp{cf} in the storage units which is particulary challenging as the current energy levels are not directly available to the \ac{mpc}.

\subsection{Estimation of energy level of the battery}
In Problem~\ref{prob:simple:mpc}, when the communication to storage $i \in S$ has failed, then the corresponding $x_{i}$ is not available. However, this can be estimated from the default power set\hh points \eqref{eq:fallBack:stategy}, power output of the units~\eqref{eq:unit:power} and the power balance equation~\eqref{eq:powerBalance}.
In Problem~\eqref{prob:simple:mpc}, at time instance $k + 1$, the previous used forecasts are $\hatw_{\trs}\ok$ and $\hatw_{\tll}\ok$. The actual previous renewable infeed and load are $w_{\trs}\ok$ and $w_{\tll}\ok$.
Therefore, $\rho\ok$ that ensures power balance for the actual uncertainty is

\begin{subequations}\label{eq:estimation:storage}
	\begin{equation}\label{eq:rho:estimation}
		\rho\ok = \frac{(\OneT\Delta \wll \ok + \OneT\Delta \wrs \ok)}{\chi},
	\end{equation}
	where
	\begin{equation}\begin{split}
		\Delta \wll \ok =& \hatw_{\tll}\ok {- \wll\ok},\\
		\Delta \wrs \ok =& \min(\com(\urs\ok, d_\trs\ok, \zeta_\trs\ok), \hatw_{\trs}\ok)  \\
	 &{-\min(\com(\urs\ok, d_\trs\ok, \zeta_\trs\ok), \wrs\ok)},\\
	 \chi =& \OneT(\com(\deltatk, \delta_d\ok, \zeta_\tth\ok) \wedge \chi_{t}) + \OneT \chi_{\tst}.\nonumber
	\end{split}
\end{equation}
Now the power output of the battery with \ac{cf} is
\begin{align}
	\psti\ok = d_{\tst, i}\ok + \chi_{\tst, i} \rho \ok
\end{align}
and the resulting energy level is obtained by substituting this power in the storage dynamics given by Equation \eqref{eq:storage:dynamics} as
\begin{align}
	\hat{x}_{\tst, i}(k+1) = x_{\tst, i}\ok - b_{\tst, i}\psti\ok
\end{align}
\end{subequations}

\begin{remark}
The advantage of the above estimation method is that it works independently of the number of storage units with \ac{cf}.
If the losses are included in the network model,
$\rho\ok$ is obtained from solving the non-linear power equations.
\end{remark}

\begin{remark}
An alternative estimation method can be used when there is at least one active thermal unit or a storage unit without \ac{cf}.
Then the estimate $\rho(k)$ is based on the power set\hh point and the actual power output at time $k$.
This can be used to estimate the energy levels for the remaining storage units at $k + 1$.
\end{remark}

\subsection{Enhanced MPC problem with \ac{cf}}
In Problem~\ref{prob:simple:mpc},
the storage unit with \ac{cf} follows the default power set\hh point irrespectively of the renewable infeed or the load demand. This can lead to violation of the energy limits of the storage or can make the whole \ac{mg} unreliable. Therefore, we reformulate the previous \ac{mpc} for \ac{cf} such that the local control behaviour at the storage and thermal unit is taken into account. This allows to influence charging or discharging of the battery indirectly.

Let us define $\boldsymbol{\rho} = [\rho\ok ~ \cdots ~ \rho(k + h)]$, $h \in \N$.
The \ac{mpc} problem with \ac{cf} can be formulated as
\begin{problem}[Enhanced \ac{mpc} with \ac{cf}] \label{prob:simple:mpc:cf}
  \begin{subequations}
    \[
      \min_{\mathbf{u}, \boldsymbol{\delta}, \boldsymbol{\rho}} \ V(\mathbf{p}, \boldsymbol{\delta})
    \]
		where $V(\mathbf{p}, \boldsymbol{\delta})$ is same as in Problem~\ref{prob:simple:mpc},
    subject to \eqref{eq:unit:power}, \eqref{eq:powerBalance}, \eqref{eq:power:limits}, \eqref{eq:problem:forecast}, previous switch status $\delta_{\tth}(k - 1)$ and current energy level for storage units given as
		\begin{align}
		x_{\tst, i}\ok = \begin{cases}
		x_{\tst, i}\ok, \text{ if }\zeta_{s, i} = 1,\\
		\text{ estimated from \eqref{eq:estimation:storage}}, \text{ if }\zeta_{s, i} = 0
	\end{cases}
\end{align}

\end{subequations}

\end{problem}

\begin{remark}
	In Problem~\ref{prob:simple:mpc:cf}, the power set\hh points of the units are provided in a such a way that local control at the storage and thermal units achieves power balance. This is realised in the problem with $\boldsymbol{\rho}$ as a decision variable. Here we did not include any constraints on $\boldsymbol{\rho}$ that depend on the local control. However, one can include additional constraints on $\boldsymbol{\rho}$ easily.
\end{remark}
\begin{remark}
Any feasible power set\hh points for Problem \ref{prob:simple:mpc} are also feasible for Problem \ref{prob:simple:mpc:cf}. Therefore, for given current state and forecasts, the soultion of Problem \ref{prob:simple:mpc:cf} cannot be worse than the solution of Problem \eqref{prob:simple:mpc}.
\end{remark}

\subsection{Impact of communication failure}
In~\cite{FalFuWu2012}, it is mentioned that \ac{cf} could deteriorate the operation of the \ac{mg} both in terms of feasible region and operation cost. It is worthwhile to discuss this deterioration caused by \ac{cf} in Problem~\ref{prob:simple:mpc:cf}.

\begin{proposition}\label{pros:special:case}
	Consider the case with a single \ac{cf} in the system, occurring at a storage unit. Now, for given current energy levels and forecasts, the open\hh loop optimal solutions of Problem~\ref{prob:simple:mpc:cf} with and without the above mentioned \ac{cf} are equivalent. 
\end{proposition}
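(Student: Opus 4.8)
\emph{Strategy.} Since $V$ depends on the decision variables only through the predicted powers $\mathbf p$ and the switch sequence $\boldsymbol\delta$, it suffices to prove that the two instances of Problem~\ref{prob:simple:mpc:cf} --- one with $\zeta=0$ at a single storage unit $i^\star\in S$ and all other units communicating, the other with no \ac{cf} --- attain the same set of pairs $(\mathbf p,\boldsymbol\delta)$; the optimal values then coincide and any optimal $(\mathbf p^\star,\boldsymbol\delta^\star)$ of one instance is optimal for the other. Throughout I would use that $\com(u,d,1)=u$, that the set-points $u$ and defaults $d$ obey the same box as the powers in \eqref{eq:power:limits}, and that --- the current state and forecasts being identical in both instances --- the energy level of $i^\star$ reconstructed via \eqref{eq:estimation:storage} coincides with the one used directly in the \ac{cf}-free instance.

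\emph{Reduction and the easy inclusion.} First I would show the \ac{cf}-free instance attains exactly what is attainable with $\boldsymbol\rho\equiv0$, i.e.\ the pairs of the certainty-equivalence Problem~\ref{prob:simple:mpc}: given feasible $(\mathbf u,\boldsymbol\delta,\boldsymbol\rho)$ with power $\mathbf p$, put $\boldsymbol\rho\equiv0$, $u_{\tst,i}\okj:=p_{\tst,i}\okj$, $u_{\tth,i}\okj:=p_{\tth,i}\okj$ (keeping $u_\trs$); by \eqref{eq:storage:power}--\eqref{eq:renewable:power}, with all $\zeta=1$ and $\delta\wedge0=0$, this leaves $\mathbf p$, $\boldsymbol\delta$ and the dynamics \eqref{eq:storage:dynamics} untouched, and the new set-points lie in the box of \eqref{eq:power:limits} because the powers did. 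Conversely, any point feasible for the \ac{cf} instance is feasible for the \ac{cf}-free one with the same $(\mathbf p,\boldsymbol\delta)$: keep $\boldsymbol\rho$ and all other set-points and set $u_{\tst,i^\star}\okj:=d_{\tst,i^\star}\okj$, so the power $d_{\tst,i^\star}\okj+\chi_{\tst,i^\star}\rho\okj$ of $i^\star$ is unchanged and $d$ obeys \eqref{eq:power:limits} by assumption. Hence the \ac{cf}-free optimum is no larger than the \ac{cf} optimum.

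\emph{Reverse inclusion.} For the converse I would take an optimizer of the \ac{cf}-free instance, normalised as above to $\boldsymbol\rho\equiv0$ and $u^\star_{\tst,i}=p^\star_{\tst,i}$, with data $(\mathbf p^\star,\boldsymbol\delta^\star)$. In the \ac{cf} instance the power of $i^\star$ equals $d_{\tst,i^\star}\okj+\chi_{\tst,i^\star}\rho\okj$, so to reproduce $p^\star_{\tst,i^\star}\okj$ I take
\[
\rho\okj:=\frac{p^\star_{\tst,i^\star}\okj-d_{\tst,i^\star}\okj}{\chi_{\tst,i^\star}},
\]
which satisfies the $i^\star$-row of \eqref{eq:st:pminmax} since $p^\star_{\tst,i^\star}\okj\in[p_{\tst,i^\star}^{\tmin},p_{\tst,i^\star}^{\tmax}]$, and for every other (communicating) unit I absorb this value into its set-point: $u_{\tst,i}\okj:=p^\star_{\tst,i}\okj-\chi_{\tst,i}\rho\okj$ for $i\in S\setminus i^\star$, $u_{\tth,i}\okj:=p^\star_{\tth,i}\okj-(\delta^\star_{\tth,i}\okj\wedge\chi_{\tth,i}\rho\okj)$ for $i\in T$, $u_\trs$ unchanged, $\boldsymbol\delta:=\boldsymbol\delta^\star$. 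Then \eqref{eq:unit:power} reproduces $\mathbf p^\star$ exactly, so \eqref{eq:powerBalance}, \eqref{eq:xminmax} and the line limits hold and the reconstructed energy level of $i^\star$ matches.

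\emph{Main obstacle.} The one point needing care --- and the step I expect to be the crux --- is that the compensating set-points just constructed still lie in the box \eqref{eq:power:limits}: moving the shared local-control variable to the nonzero value above and charging it to the remaining units' set-points must not drive any of them past its limits. This is immediate when that value is $0$, i.e.\ when the \ac{cf}-free optimum happens to request $p^\star_{\tst,i^\star}=d_{\tst,i^\star}$; in general I would either invoke the mild, operationally typical assumption that the set-point limits of units without a \ac{cf} are non-binding at the optimum, or re-choose the \ac{cf}-free optimizer by a small transfer of power onto $i^\star$, which is cost-non-increasing by convexity of the stage cost (cf.\ \eqref{eq:costStorage}), so as to make that value small enough. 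Granting this, the constructed point is feasible for Problem~\ref{prob:simple:mpc:cf} with cost $V(\mathbf p^\star,\boldsymbol\delta^\star)$; combined with the easy inclusion, the two optimal values coincide and $(\mathbf p^\star,\boldsymbol\delta^\star)$ solves both instances.
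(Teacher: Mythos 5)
Your proposal follows essentially the same route as the paper's proof: the same choice of the frequency-deviation variable, $\rho\okj=(p^{1}_{\tst,q}\okj-d_{\tst,q}\okj)/\chi_{\tst,q}$, the same absorption of $\chi_{\tst,q'}\rho\okj$ (resp.\ $\deltath\okj\wedge\chi_{\tth}\rho\okj$) into the set\hh points of the communicating units, and the same trivial reverse direction obtained by setting $\boldsymbol{\rho}\equiv 0$ and $\mathbf{u}=\mathbf{p}^{1}$. The one place where you go beyond the paper is your \emph{Main obstacle} paragraph, and you are right to flag it: the paper's proof simply asserts that ``in the remaining units without \ac{cf}, there are no constraints on the power set\hh points,'' which is in tension with the model statement in Section~\ref{sec:system:model} that the set\hh points $u\ok$ and defaults $d\ok$ must themselves satisfy \eqref{eq:power:limits}. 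Under the reading that \eqref{eq:st:pminmax}--\eqref{eq:th:pminmax} constrain only the realised powers $p$, both your argument and the paper's close without further work; under the literal reading, the compensating set\hh point $u_{\tst,q'}\okj=p^{1}_{\tst,q'}\okj-\chi_{\tst,q'}\rho\okj$ can indeed leave the box when $\rho\okj\neq 0$, and some additional hypothesis (non\hh binding set\hh point limits, or a cost\hh non\hh increasing reallocation as you sketch) is needed. So your proof is not a different route but a more honest rendering of the paper's own, exposing a gap the paper passes over silently; if you keep the perturbation fix, it should be carried out rather than only sketched, since the switching cost term in \eqref{eq:costGen} makes the objective non\hh convex in $\boldsymbol{\delta}$ and the convexity argument applies only with $\boldsymbol{\delta}$ held fixed.
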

\begin{proof}
	It is sufficient to show that any power profile which is feasible without \ac{cf} is also feasible with \ac{cf} and vice\hh versa.
	Let us denote a feasible power profile without \ac{cf} as $\mathbf{p}^{1} = [p^{1}\ok ~ \cdots ~ p^{1}(k + h)]$.
	Let us assume that the storage unit with \ac{cf} is $q \in S$ and its corresponding default power set\hh points are $\mathbf{d}^{2}_{\tst, q} = [d_{\tst, q}^{2}\ok ~ \cdots ~ d_{\tst, q}^{2}(k + h)]  $.
	Now we need to show that $\mathbf{p}^{1}$ is feasible for the problem with \ac{cf}. This is possible only if we can select the following $\rho(k + j)$ and find suitable power set\hh points for the units with communication:
	\begin{equation}
		\rho(k + j) = \nicefrac{(p_{\tst, q}^{1}\ok - d_{\tst, q}^{2}\ok)}{\chi_{\tst, q}\ok}. \nonumber
	\end{equation}
	In the remaining units without \ac{cf}, there are no constraints on the power set\hh points. At any storage unit $q^{\prime} \in S\backslash q$, the power $p_{s,q^{\prime}}^{1}(k+j)$ can be generated by providing a set\hh point
	\begin{equation}
	 u_{\tst, q^{\prime}}^{2}(k + j) = p_{s,q^{\prime}}^{1}(k+j) - \chi_{\tst, q^{\prime}}\rho(k + j). \nonumber
 \end{equation}
 Similarly, we can show how the thermal and \ac{res} units can generate the power output $p^{1}_{q^{\prime}}(k + j), \forall q^{\prime} \in R \cup T$. Therefore, $\mathbf{p}^{1}$ is also a feasible for the problem with \ac{cf}.

 For proving the vice\hh versa condition, consider now  $\mathbf{p}^{1}$ as a feasible power profile with \ac{cf}. Now in the problem without \ac{cf}, picking $\rho(k +j) = 0$ and $\mathbf{u} =  \mathbf{p}^1$ would acheive this power profile. This concludes the proof that both cases are equivalent in terms of power profile and objective value.
\end{proof}


\section{Case Study}\label{sec:caseStudy}
This section presents a case study that analyses the performance of the \ac{mpc} with \acp{cf}.

\begin{figure}[t]
  \centering
  \includegraphics{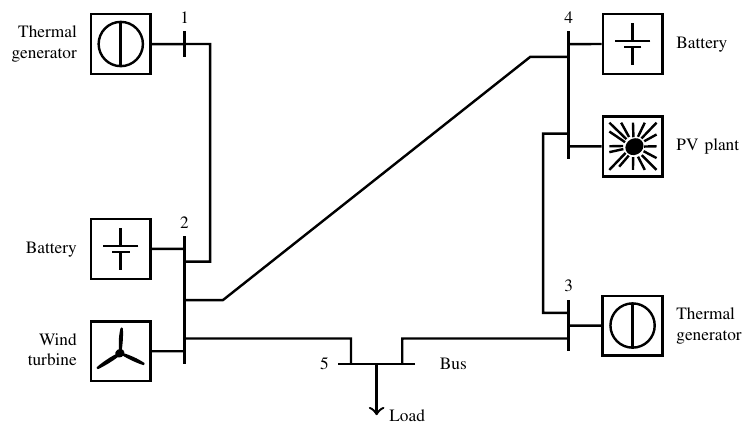}
  \caption{Microgrid considered in the case study.}
  \label{fig:oreoModel}
\end{figure}

\begin{table}[t]
  \centering
  \caption{Weights of cost function}
  \label{table:weights}
  \begin{tabular}{l l c l l}
  \toprule
  {Weight} & {Value} & & {Weight} & {Value}\\
  \cmidrule{1-2} \cmidrule{4-5}
  $a_{\tth}$   & $[0.43~~~ 0.67]\T                          $ & & $a_{\tst}$  & $[0.03~~~ 0.04]\T~ \rfrac{1}{\text{pu}^2}$       \\
  $a_{\tth,1}$ & $[0.335~~~ 0.475]\T                        $ & & $a_{\tst,1}$& $[50~~~ 50]\T~ \rfrac{1}{\text{puh}^2}$    \\
  $a_{\tth,2}$ & $[1.116~~~ 1.044]\T ~ \rfrac{1}{\text{pu}} $ & & $a_{\trs}$  & $[0.03~~~ 0.04]\T~ \rfrac{1}{\text{pu}^2}$    \\
  $a_{\tth,3}$ & $[1.685~~~ 0.778]\T~ \rfrac{1}{\text{pu}}$ & & &  \\
  \bottomrule
\end{tabular}
\end{table}

\begin{table}[t]
  \centering
  \caption{Operation limits and unit parameters of the \ac{mg}}
  \label{table:limits}
  \begin{tabular}{l l c l l}
  \toprule
  {Parameter} & {Value} & & {Parameter} & {Value}\\
  \cmidrule{1-2} \cmidrule{4-5}
  $[p^\text{min}_{\tth}~p^\text{max}_{\tth}]$ & $\begin{bmatrix} 0.08 & 0.6 \\ 0.17 & 1.0 \end{bmatrix} ~\text{pu}$ & & $[x^\text{min}_{}~x^\text{max}_{}]$ & $\begin{bmatrix} 0 & 2 \\ 0 & 3 \end{bmatrix}~ \text{puh}$ \\ \addlinespace[.3ex]
  $[p^\text{min}_{\tst}~p^\text{max}_{\tst}]$  & $\begin{bmatrix} -1 & 1.0 \\ -0.75 & 0.75 \end{bmatrix} ~\text{pu} $ & & $[x^\text{min}_{\tst}~x^\text{max}_{\tst}]$ & $\begin{bmatrix} 0.2 & 1.8 \\ 0.3 & 2.7 \end{bmatrix}~ \text{puh}$ \\ \addlinespace[.3ex]
  $[p^\text{min}_{\trs}~p^\text{max}_{\trs}]$ & $\begin{bmatrix} 0 & 2 \\ 0 & 2 \end{bmatrix} ~\text{pu} $ & & $\chi_{\tth}$  & $~[0.6~~~ 1]\T$ \\ \addlinespace[.3ex]
  $[p^\text{min}_\text{el}~p^\text{max}_\text{el}]$ & $\begin{bmatrix} -1.0 & 1.0 \end{bmatrix} ~\text{pu} $ & & $\chi_{\tst}$
  & $~[0.5 ~~~ 1]\T$ \\ \addlinespace[.3ex]
  $x^0$ & $~[1.3~~~ 0.8]\T ~\text{puh}$ & & $\eta_{\tst}$ & $~[0.95~~~ 0.95]\T$\\ \addlinespace[.3ex]
  $\delta^0_{\tth}$ & $~[0~~~ 0]\T$   & &  
  \\ \addlinespace[.3ex]
  \bottomrule
\end{tabular}
\end{table}

In the case study, we consider the \ac{mg} topology shown in Figure \ref{fig:oreoModel}.
It consists of two storage devices, two thermal generators, two \ac{res} and one load.
The power profile for available renewable infeed was provided by~\cite{ARM2009} and a load profile that emulates realistic behaviour was applied (see Fig. \ref{fig:powerProfiles}).
\begin{figure}[t]
  \centering
  \includegraphics{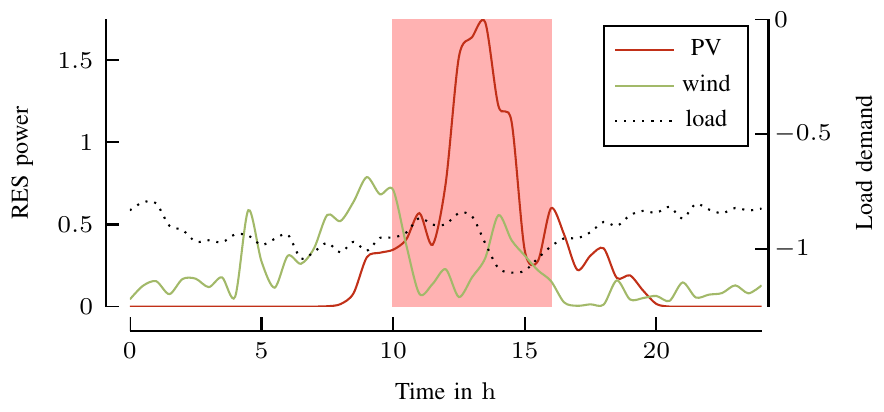}
  \caption{Power profiles of \ac{res} and the load demand. The highlighted part shows the \ac{cf} period between $10.00$ to $16.00$.}
  \label{fig:powerProfiles}
\end{figure}
The generator parameters were taken from \cite{ParRikGli2014} and transformed to per-unit (\unit{pu}).
The weights of the operational cost and the operational limits can be found in Tables \ref{table:weights} and \ref{table:limits}. In \ac{mpc}, the sampling time is $T_s = \unit[30]{min}$, the prediction horizon is $h = 12$ and the discount factor is $\gamma = 0.95$.
A naive forecaster provides the forecasts for the \ac{res} infeed and the load demand of the \ac{mg}~\cite{hyndman2018}.
Finally, the \ac{mpc} problem is implemented in Matlab\textsuperscript{\textregistered}~2017b with YALMIP~\cite{Lof2004} and solved with Gurobi 6.5.
The analysis was carried out for a simulation period of one day.
First, the closed\hh loop simulation for this period is performed with \ac{mpc} without occurrence of any \ac{cf}. The results are labelled as reference in Figures
\ref{fig:simulationResultsScen-1} and
\ref{fig:simulationResultsScen-2} and Table \ref{table:simulationResults}. These results serve as reference for the evaluation of the scenarios with \acp{cf}.

\subsubsection*{Simulation with \ac{cf}}
The duration of the \ac{cf} is selected as $\unit[6]{h}$, from $10.00$ to $16.00$, which is highlighted in Figure~\ref{fig:powerProfiles}.
It can be observed that \ac{res} have high infeed during this period and storage units ideally store the excess generation. Two \acl{cf} scenarios are considered in this period:
\begin{enumerate}
   \item[-] \textit{Scenario} I : \ac{cf} at one storage unit - battery at bus $4$.
   \item[-] \textit{Scenario} II : \ac{cf} at both the storage units.
\end{enumerate}
Closed\hh loop simulations are carried out for these scenarios with 1) \ac{mpc} Problem~\ref{prob:simple:mpc}, which is referred to as \ac{mpc} and 2) \ac{mpc} Problem~\ref{prob:simple:mpc:cf} which is referred to as enhanced \ac{mpc}.
The energy of the batteries in scenario I and scenario II is shown in Figure~\ref{fig:simulationResultsScen-1} and Figure~\ref{fig:simulationResultsScen-2}.
The energy wasted from the \ac{res} and energy output from the thermal units during the closed\hh loop simulation for the above scenarios
are summarised in Table~\ref{table:simulationResults}.

\begin{figure}[t]
  \centering
  \includegraphics{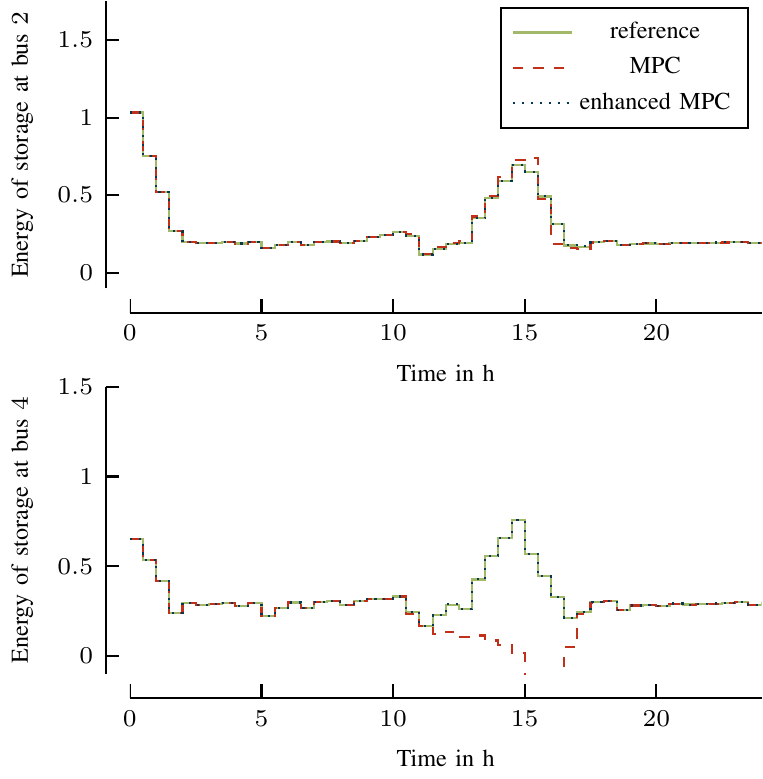}
  \caption{Closed-loop simulations in scenario I: Energy levels of storages with 1) reference 2) \ac{mpc} 3) enhanced \ac{mpc}.}
  \label{fig:simulationResultsScen-1}
\end{figure}

\begin{figure}[t]
  \centering
  \includegraphics{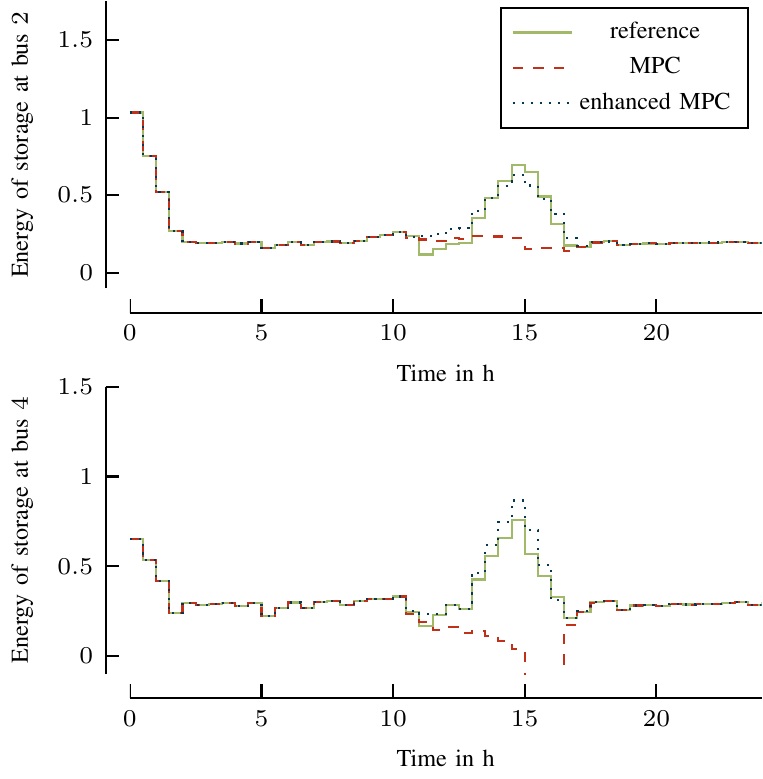}
    \caption{Closed-loop simulations in scenario II: Energy levels of storages with 1) reference 2) \ac{mpc} 3) enhanced \ac{mpc}.}
  \label{fig:simulationResultsScen-2}
\end{figure}

In scenario I, the energy of the batteries with enhanced \ac{mpc} coincides with the reference \ac{mpc} (without \ac{cf}). This is  empirical evidence that the statement made on open\hh loop behaviour in Proposition~\ref{pros:special:case} also applies to closed\hh loop behaviour.
In the case with \ac{mpc}, the storage at bus $4$ cannot be influenced during the \ac{cf} period and only the storage at bus $2$ is utilised for storing energy. This effect can be noticed in the storage at bus $2$ whose energy level is higher than reference for some periods during the \ac{cf}.
Furthermore, from Table~\ref{table:simulationResults} it can be observed that the wasted \ac{res} energy and the thermal energy with both the reference and the enhanced \ac{mpc} coincide. \ac{mpc} leads to an increase in \ac{res} wastage that is compensated by the thermal units.

In scenario II, the energy of the batteries with enhanced \ac{mpc} slightly differs from the reference. However, it can be noticed that the energies are close to the reference profile. In the case with \ac{mpc} both the storages cannot be influenced and therefore their energy profiles deviate from the reference profile.
Table~\ref{table:simulationResults} shows that the enhanced \ac{mpc} results in a slightly increased wastage of \ac{res} energy and the thermal energy compared to scenario I.
However, these values are lower than the \ac{mpc}.

\begin{table}[t]
  \centering
  \caption{\ac{res} energy wastage and the thermal energy output in closed\hh loop simulation}
  \label{table:simulationResults}
  \begin{tabular}{c c c c c c}
  \toprule
  & {without \ac{cf}} & \multicolumn{2}{c}{scenario I} & \multicolumn{2}{c}{scenario II} \\
   & {reference} & {\ac{mpc}} & \makecell{enhanced \\ \ac{mpc}} & {\ac{mpc}} & \makecell{enhanced \\ \ac{mpc}}\\
  \midrule
  \makecell{\ac{res} wastage \\ energy ($\unit{puh}$)} & 2.68 & 3.16 & 2.68
    & 3.85 & 2.73\\
  \makecell{thermal energy\\ ($\unit{puh}$)} & 11.68 & 12.10 & 11.68
    & 12.77 & 11.71\\
  \bottomrule
\end{tabular}
\end{table}

We can conclude that in scenario I, where the \ac{cf} occurs at only one storage unit, the enhanced \ac{mpc} results in the same power profiles as the reference.
In both scenario I and scenario II, the enhanced \ac{mpc} has a better performance than the \ac{mpc}.


\section{Conclusion}\label{sec:conclusion}
In this paper, we proposed an \ac{mpc} formulation for the operation of an \ac{mg} during communication failure.
This formulation takes into account the hierarchical control layers of the \ac{mg}. This allows to adjust the power output of a unit affected by communication failure.
In particular, we formulated the problem when the communication failure occurs in storage units. In this case, we also discuss the estimation of the current energy level of the battery with communication failure. Finally, we show the benefits of the proposed approach with a numerical example.

The proposed approach assumes only communication failures and therefore a possible extension is to include electrical failures. Also, the current approach does not account for uncertainties in the forecasts and possible future communication failures. Including them in a robust framework is another future direction.

\bibliographystyle{IEEEtran}
\bibliography{IEEEabrv,literature}


\end{document}